\documentclass[12pt]{article}

\usepackage[letterpaper, margin=1in]{geometry}
\usepackage{setspace}
\usepackage{tikz}
\usepackage{tabularx}

\usepackage[T1]{fontenc}
\usepackage[utf8]{inputenc}
\usepackage{lmodern}
\usepackage{amsmath, amssymb, amsthm, mathtools}
\usepackage{dsfont}

\usepackage{graphicx}
\usepackage{subcaption}
\usepackage{xcolor}
\usepackage{booktabs}
\usepackage{multirow}
\usepackage{enumitem}

\setlist{noitemsep}
\setlist{nosep}

\graphicspath{{Figures/}{./Figures/}}

\usepackage[numbers, square, comma, sort&compress]{natbib}

\newtheorem{lemma}{Lemma}

\newtheorem{proposition}{Proposition}
\theoremstyle{definition}

\newcommand{\Pprob}{\mathbb{P}}

\newcommand{\given}{\, \vert \,}

\mathtoolsset{showonlyrefs}

\usepackage{hyperref}
\hypersetup{colorlinks=true, linkcolor=black, citecolor=blue, urlcolor=blue}
\usepackage[capitalise, noabbrev]{cleveref}

\title{On the Equivalence of Instantaneous and Mechanistic Reproduction Numbers}
\author{Jeremy Goldwasser\thanks{Department of Statistics, University of California, Berkeley.} \and  Ryan J.\ Tibshirani\footnotemark[1] \and Alyssa Bilinski\thanks{Departments of Health Policy and Biostatistics, Brown University.}}
\date{}

\begin{document}
\maketitle

\begin{abstract}
The effective reproduction number ($R_t$) is widely used to track epidemic dynamics in real time.
The standard estimation framework uses ``instantaneous $R_t$,'' defined via the renewal equation, which relates new infections to past infections through a generation interval distribution.
Compartmental models like SEIR yield a seemingly distinct quantity---``mechanistic $R_t$''---based on the effective contact rate and duration of infectiousness.
We prove these two definitions are equivalent under homogeneous mixing, the standard assumption in compartmental modeling.
We also derive the generation interval distribution implied by SEIR dynamics.
A practical consequence is that generation intervals, often treated as assumption-light inputs to renewal equation estimators, in fact encode specific compartmental structure.
\end{abstract}

\section{Introduction}

The effective reproduction number ($R_t$) is a central metric in infectious disease epidemiology. It summarizes the transmissibility of a pathogen as the average number of secondary infections generated by a single infected individual at time $t$.
Reliable, timely estimates of $R_t$
are useful for public health decision-making, enabling policymakers to adopt and adjust intervention measures and assess their effectiveness.

There are multiple competing definitions of $R_t$, which can be operationalized in several ways \citep{gostic2020practical}.
Two were introduced in \citet{fraser2007}.
\textit{Case $R_t$} is the average number of secondary transmissions amongst the cohort of individuals infected at time $t$: a forward-looking quantity.
\textit{Instantaneous $R_t$} instead describes the average number of transmissions that occur at $t$, from infections before $t$.
This backward-looking notion is the standard definition in practice: it engenders a formula called the renewal equation, which is amenable to real-time estimation with methods like EpiEstim and EpiNow2 \citep{cori2013new, abbott2020estimating}.

A third formulation arises from compartmental epidemic models.
These models decompose a population into groups based on disease status \citep{anderson1991infectious, kermack1927contribution}.
For example, SEIR models place individuals as Susceptible, Exposed, Infectious, or Recovered.
Movement between compartments is governed by transmission and recovery parameters; the reproduction number can be expressed in terms of these.
Models like SEIR are more commonly used in the broader infectious disease modeling literature for intervention design, scenario planning, and forecasting.
However, the link between renewal equation $R_t$ estimators and standard mechanistic models is less well-established.

In this paper, we develop an explicit crosswalk between these two perspectives.
We show that if both the mechanistic model and generation interval are correctly specified, the definition of $R_t$ that
arises from compartmental dynamics equals the instantaneous $R_t$ targeted by the renewal equation.
We also derive the generation interval implicit in SEIR models.

\subsection*{Notation}

\cref{tab:notation} summarizes the notation used throughout this paper.

\begin{table}[!ht]
\centering
\caption{Notation used throughout this paper.}
\label{tab:notation}
\renewcommand{\arraystretch}{1.15}
\begin{tabularx}{\linewidth}{@{}l >{\raggedright\arraybackslash}X@{}}
\toprule
\textbf{Symbol} & \textbf{Meaning} \\
\midrule
\multicolumn{2}{@{}l}{\textsc{Reproduction numbers}} \\
\addlinespace[2pt]
$R_0$ & Basic reproduction number \\
$R_t^I$ & Instantaneous reproduction number \\
$R_t^C$ & Case (cohort) reproduction number \\
$R_t^M$ & Mechanistic reproduction number \\
\midrule
\multicolumn{2}{@{}l}{\textsc{Infection counts}} \\
\addlinespace[2pt]
$x_t$ & Infections at $t$ ($E_t^*$ in SEIR model) \\
\midrule
\multicolumn{2}{@{}l}{\textsc{Renewal structure}} \\
\addlinespace[2pt]
$w(t,k)$ & Rate of secondary transmission at $t$ from an infection at $t-k$ \\
$g$ & Generation interval distribution \\
\midrule
\multicolumn{2}{@{}l}{\textsc{Compartmental structure}} \\
\addlinespace[2pt]
$N$ & Population size \\
$S_t, E_t, I_t$ & Susceptible, Exposed, Infectious prevalence at time $t$ \\
$E_t^*, I_t^*, R_t^*$ & Incidence into $E$, $I$, $R$ compartments at $t$ \\
$\beta_t$ & Effective contact rate \\
$\sigma,\ \gamma$ & Exit rates from $E$ and $I$ in basic compartmental model  \\
$ \mu^\text{EI}, \mu^\text{IR}$ & Average incubation and infectious periods; $\sigma^{-1}$ and $\gamma^{-1}$ in the basic model.\\
$W^\text{EI}, W^\text{IR}$ & Random variables for latent and infectious periods \\
$\pi^\text{EI}, \pi^\text{IR}$ & Delay distributions: $E{\to}I$, $I{\to}R$ \\
\bottomrule
\end{tabularx}
\end{table}

\section{Background}\label{sec:background}

Literature on reproduction numbers dates back decades, especially on the basic reproduction number
\citep{dublin1925true, kermack1927contribution, macdonald1952equilibrium, diekmann1990definition, anderson1991infectious, heesterbeek2002brief}.
This metric, often referred to as $R_0$, is the expected number of secondary infections produced by one infectious individual in a fully susceptible population.
It satisfies a sharp threshold: epidemics grow when $R_0 > 1$ and die out when $R_0 < 1$.
Other important heuristics can be approximated from $R_0$, including short- and long-term forecasts, final size analysis, and the herd immunity threshold.

Since the assumption of full susceptibility rarely holds beyond an outbreak's earliest moments, attention often shifts to the effective reproduction number $R_t$.
Colloquially, $R_t$ is the expected secondary infections from an individual infected at time $t$ given the current population state.
\citet{fraser2007} operationalizes this definition with two related formalizations, which are conducive to estimation.
While his ideas were originally presented in continuous time, we define them discretely for simplicity of explanation.

Let $x_t$ be the number of individuals infected at $t$.
This indexes on the date the infection transmits, as opposed to when the person becomes infectious (capable of transmitting to others).
Further define $w(t,k)$ as the rate of secondary transmissions for a primary infection of age $k$ at time $t$:
$$w(t,k)=\mathbb{E}[\text{\# transmissions at $t$}\given\text{individual infected at $t-k$}].$$
This rate defines the mean process that underlies the transmission model:
\begin{equation}\label{eq:fraser_alt}
    \mathbb{E}[x_t\given x_{< t}] = \sum_{k> 0} x_{t-k} \cdot w(t,k),
\end{equation}
where in general $x_{<t} = (x_0,\dots,x_{t-1})$.

Some works define $w(t,k)$ to satisfy an exact transmission model, dropping the expectation in \eqref{eq:fraser_alt} \citep{fraser2007, abbott2020estimating}.
However, when infection counts $x_t$ are random, this empirical $w(t,k)$ must shift accordingly, implying $R_t$ is nonsmooth over time.
To avoid this, we treat the rate $w(t,k)$ as an expectation, per works like \citet{cori2013new}.

Before presenting $R_t$ definitions, we introduce the generation interval distribution $g$.
The generation interval is the age of a primary infection at the time that it transmits to a second individual.
In discrete time, each element $g_k$ is the probability that transmission occurs $k$ timesteps after infection, given it occurs at all.
\citet{fraser2007} formalizes $g_k$ as the $k$-step transmission rate normalized by its total mass \eqref{eq:inst}:
\begin{align}\label{eq:gi_basic}
    g_k&=\frac{w(t,k)}{\sum_j w(t,j)}\\
    &=\mathbb{P}(\text{transmission occurs $k$ steps after infection}\given\text{transmission occurs}).
\end{align}
\subsection{Case and Instantaneous \texorpdfstring{$R_t$}{Rt}}

The case reproduction number (also called cohort $R_t$) is defined as
\begin{align}
    R_t^C &= \sum_k w(t+k,k) \\
    &\approx \text{Average secondary infections from cohort of primary infections at $t$.}
\end{align}

Case $R_t$ is a forward-looking quantity.
It reports conditions at time $t$ by how many secondary events (transmissions) they generate in the future.

While straightforward, case $R_t$ may be more descriptive of the near future than the present \citep{gostic2020practical}.
To analyze an intervention, $R_t^C$ has the undesirable property of changing before it occurs, since it affects individuals who were already infected.
This complicates interpretation, as it erroneously suggests transmission was already falling before the imposition of the intervention.

While we study case $R_t$ further in \cref{apx:case_rt}, this paper focuses more on the other definition in \citet{fraser2007}.

Case $R_t$ aggregates $k$-step transmission rates from $t$ onwards.
As an alternative, instantaneous $R_t$ shifts these rates backwards in time:
\begin{align}\label{eq:inst}
    R_t^I &= \sum_k w(t,k) \\
    &\approx \text{Average secondary infections at $t$ from a recent primary infection.}
\end{align}
Unlike case $R_t$, this rate is inherently backward-looking, in that it does not consider transmission after $t$.
This makes it more suitable for understanding present conditions, such as the effect of interventions.
Note that case and instantaneous $R_t$ are the same if \mbox{$w(t+k,k)=w(t,k)\;\forall k$}, meaning if $k$-step transmission rates are equivalent before and after $t$.
Thus, we can also interpret instantaneous $R_t$ as being the average number of secondary transmissions per primary infection at time $t$ assuming conditions remain the same into the near future.

Plugging \eqref{eq:inst} into the definition of generation interval \eqref{eq:gi_basic} yields \mbox{$g_k=w(t,k)/R_t^I$}.
Substituting this relation into the transmission model \eqref{eq:fraser_alt} produces the renewal equation:
\begin{equation}\label{eq:renewal}
    \mathbb{E}[x_t\given x_{< t}] = R_t^I \sum_{k> 0} x_{t-k} \cdot g_k,
\end{equation}
The renewal equation may be rearranged to solve for $R_t^I$, with $\mathbb{E}[x_t\given x_{< t}] $ in the numerator.

Many works assume a stochastic noise model for infections $x_t$.
For example, \citet{cori2013new} assumes $x_t$ is drawn from a Poisson distribution whose rate is parameterized by $R_t^I$.
Their Bayesian method, EpiEstim, learns a posterior on $R_t$ with this likelihood and a gamma prior.
They propose using case reports as a surrogate for infections $x_t$, which are latent (unobserved).

\subsection{Mechanistic \texorpdfstring{$R_t$}{Rt}}
\label{sec:compartmental-bg}

We now formalize the compartmental perspective on $R_t$
\citep{anderson1991infectious}.
Compartmental models partition a population of size $N$ into
groups representing disease status, with transitions between
groups governed by specified rates.
In the original SIR model, susceptible individuals move to the infectious ($I$) compartment upon transmission, before recovering ($R$) after some duration \citep{kermack1927contribution}.
The $R$ compartment is also sometimes called Removal to include death.
We focus on the SEIR model, which extends this framework
with an Exposed compartment ($E$) for individuals who are infected
but not yet capable of transmitting the disease to others.
This compartment is biologically meaningful for pathogens like influenza and COVID-19 with non-negligible
latent periods.

The standard SEIR model is a deterministic, continuous-time process governed by a set of differential equations.
These equations contain three parameters, which are conventionally treated as stationary.
Most important for our purposes is $\beta$, the contact rate.
$\beta$ is the average number of contacts per unit time for an infectious individual, multiplied by the probability that a susceptible-infectious contact transmits the disease.
In addition, exposed individuals progress to infectious at rate $\sigma$, and infectious individuals recover at rate $\gamma$.
Implicitly, these assume waiting times are exponentially distributed, with mean latent and infectious periods of $1/\sigma$ and $1/\gamma$.
We will later relax this assumption.

The basic reproduction number $R_0$ supposes the entire population is susceptible.
In a compartmental model, this is true at the onset of the epidemic, $t=0$.
There, $\beta$ is the daily number of transmissions per susceptible.
$R_0$ is the lifetime number of secondary transmissions, so its formula merely scales $\beta$ by the average duration of infectiousness.
In simple compartmental models this is $1/\gamma$, so \mbox{$R_0 = \beta/\gamma$}.

In contrast, the effective reproduction number describes conditions at time $t$---particularly, the fact that not all individuals are susceptible.
Prevalence $S_t$ decays over the course of the epidemic, from an initial value around $N$.
Therefore, mechanistic $R_t$ in its simplest form is
\begin{equation}\label{eq:mech-rt-basic}
    R_t^M = \frac{\beta S_t }{\gamma N}.
\end{equation}

Mechanistic $R_t$ is the average number of secondary transmissions over the lifetime of an \textit{infectious} individual at time $t$, assuming conditions remain the same.
(In reality, conditions will inevitably change as susceptibles $S_t$ deplete, but this is a small effect.)
This notion of active infectiousness differentiates $R_t^M$ from the $R_t^C$ and $R_t^I$, which marked individuals based on their time of infection.
We compare them in greater depth in the next section.

This definition of mechanistic $R_t$ is applicable to essentially any compartmental model.
We next present several ways in which the basic model presented above may be modified for practical purposes.

One important variation allows the rate parameters to be nonstationary.
This is particularly useful for $\beta$, as behavior and policy dynamics affect the contact rate.
We therefore depart from convention and allow $\beta_t$ to vary with time.
We call $\beta_t$ the effective contact rate, akin to the effective reproduction number $R_t$.
Without losing generality, we still assume the parameters that underlie waiting times are constant, since they are primarily biological.

Compartmental models can also be expressed in discrete-time.
Epidemic surveillance data typically arrives with daily resolution, so
the continuous-time model is often approximated with an Euler step.
This first-order approximation uses day-to-day changes in each compartment's prevalence in place of its derivative.
These changes can be attributed to both inflow and outflow;
we denote inflow ``incidence'' with asterisks.
For example, susceptible depletion can be written in terms of incident exposures $E_t^*$:
\begin{equation}
S_{t+1}=S_t - E_t^* \approx S_t + \frac{dS}{dt},\quad\text{where}\quad E_t^* \approx -\frac{dS}{dt} = \beta_t \frac{S_t I_t}{N}.
\end{equation}
Note that $E_t^*$ corresponds to $x_t$ in the prior section.

Mechanistic $R_t$ can also be revised for non-exponential waiting times between compartments.
This canonical assumption produces convenient differential equations: $I_t^* = \sigma E_t$ for infectious incidence and $R_t^* = \gamma I_t$ for recoveries.
However, the delay distributions they imply may have unrealistic variance, often too high.
Instead, define the exposure-to-infectious distribution $\pi^\text{EI}$ such that \mbox{$I_t^* = \sum_{s<t} E_s^* \pi_{t-s}^\text{EI}$} in a deterministic model;
Similarly define $\pi^\text{IR}$, the infectious-to-recovery distribution that relates \mbox{$R_t^* = \sum_{s<t} I_s^* \pi_{t-s}^\text{IR}$}.
This distribution has mean $\mu^\text{IR}$, which was $1/\gamma$ in the exponential case.
Combining $\mu^\text{IR}$ with $\beta_t$ yields our generalized definition of mechanistic $R_t$:
\begin{equation}\label{eq:mech-rt-cp}
    R_t^M = \beta_t\cdot \frac{S_t \mu^\text{IR}}{N}.
\end{equation}

Compartmental models may be stochastic, not deterministic.
Poisson models are the standard choice \citep{andersson2000stochastic},
as modeling exponential transitions as random implies a Poisson process for the number of events per unit time.
We impose this in our framework, though other noise models are possible.
The binomial distribution is a natural alternative, per the Reed-Frost model \citep{abbey1952examination}.
However, Le Cam's theorem shows that this can be well-approximated by a Poisson, when the population is large and per-timestep transition probabilities are small.

Putting these pieces together---time-varying $\beta$, discrete-time approximation, arbitrary delay distributions, and stochastic noise---we arrive at the following mechanistic model:
\begin{subequations}\label{eq:seir-model}
\begin{alignat}{2}
    S_{t+1} &= S_t - E_t^*,         &\qquad E_t^* &\sim \text{Pois}\left(\beta_t \frac{S_t I_t}{N}\right) \label{eq:seir-s}\\
    E_{t+1} &= E_t + E_t^* - I_t^*, &\qquad I_t^* &\sim \text{Pois}\left(\sum_{s<t} E_s^* \pi_{t-s}^\text{EI}\right) \label{eq:seir-e}\\
    I_{t+1} &= I_t + I_t^* - R_t^*, &\qquad R_t^* &\sim \text{Pois}\left(\sum_{s<t} I_s^* \pi_{t-s}^\text{IR}\right) \label{eq:seir-i}\\
    R_{t+1} &= R_t + R_t^*.         &              & \label{eq:seir-r}
\end{alignat}
\end{subequations}
Henceforth, we use \cref{eq:seir-model} when discussing mechanistic $R_t$ and its estimation.

Finally, compartments can be added or dropped without changing the meaning of $R_t^M$.
The crucial piece is the SIR backbone, which encompasses $\beta_t$ and $\mu^\text{IR}$.
Nevertheless, we introduce mechanistic $R_t$ in terms of SEIR models because they are more justifiable than SIR.
A common extension is the SEIRD model, adding a fifth compartment for death ($D$).

\section{Equivalence of Instantaneous and Mechanistic \texorpdfstring{$R_t$}{Rt}}\label{sec:equivalence}

We now study the relationship between the definitions of $R_t$ introduced in \cref{sec:background}.
Our main result is as follows.

\begin{proposition}\label{prop:equivalence}
    Assume a population of size $N$ with homogenous mixing, meaning all contacts are equally likely to come into contact with one another.
    Then instantaneous $R_t$ and mechanistic $R_t$ are equivalent: $R_t^I=R_t^M$.
\end{proposition}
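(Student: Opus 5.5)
The plan is to compute the transmission kernel $w(t,k)$ directly from the stochastic SEIR model \eqref{eq:seir-model} and then sum it over $k$. Homogeneous mixing enters through the force of infection: every susceptible faces the same per-capita hazard $\beta_t I_t/N$. Equivalently, every currently infectious individual generates new exposures at the common expected rate $\beta_t S_t/N$ at time $t$, whatever its infection age. This is the only place the assumption is used. Without it, the per-infectious rate would depend on who is infectious, and $w(t,k)$ would not factor.

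First, fix an individual infected (exposed) at time $t-k$. Conditional on the history, the expected number of transmissions it causes at $t$ is
\begin{equation}
w(t,k) = \frac{\beta_t S_t}{N}\cdot \Pprob(\text{individual is in $I$ at $t$} \given \text{infected at } t-k).
\end{equation}
Under \eqref{eq:seir-model}, the latent period $W^\text{EI}\sim\pi^\text{EI}$ and the infectious period $W^\text{IR}\sim\pi^\text{IR}$ are the individual-level delays implied by the convolution structure of $I_t^*$ and $R_t^*$, and I take them to be independent. The second factor is then
\begin{equation}
\Pprob(W^\text{EI} < k \le W^\text{EI}+W^\text{IR}),
\end{equation}
with the endpoint convention fixed by the discrete-time indexing $s<t$. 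I will check that summing $x_{t-k}w(t,k)$ over $k$ reproduces the Poisson mean $\beta_t S_t I_t/N$ of $E_t^*$. Doing so requires $\sum_k x_{t-k}\Pprob(\cdot)$ to equal $\mathbb{E}[I_t]$ given the exposure history, which holds because $I_t$ is the sum over past exposures of the indicators of currently being infectious. This confirms the transmission model \eqref{eq:fraser_alt} with this choice of $w$.

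Second, I sum over $k$. By \eqref{eq:inst},
\begin{equation}
R_t^I=\frac{\beta_t S_t}{N}\sum_{k>0}\Pprob(W^\text{EI}<k\le W^\text{EI}+W^\text{IR}).
\end{equation}
Exchanging the sum and the expectation (Tonelli), the sum equals $\mathbb{E}\bigl[\#\{k: W^\text{EI}<k\le W^\text{EI}+W^\text{IR}\}\bigr]=\mathbb{E}[W^\text{IR}]=\mu^\text{IR}$. Hence $R_t^I=\beta_t S_t\mu^\text{IR}/N=R_t^M$ by \eqref{eq:mech-rt-cp}. As a byproduct, normalizing gives $g_k=\Pprob(W^\text{EI}<k\le W^\text{EI}+W^\text{IR})/\mu^\text{IR}$, which is the SEIR generation interval. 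It does not depend on $t$, which is consistent with \eqref{eq:gi_basic}.

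The main obstacle is the first step. I must justify rigorously that the aggregate Poisson rates in \eqref{eq:seir-model} correspond to individual-level delays with these laws, and that the expected instantaneous count is exactly $\mathbb{E}[I_t\given\cdot]$ rather than the realized $I_t$. I would handle this by conditioning on the full history up to $t$ and using linearity of expectation. I also need to match the discrete-time off-by-one conventions so that the count of infectious days equals $W^\text{IR}$ exactly. In the exponential special case, the result reduces to \eqref{eq:mech-rt-basic} with $\mu^\text{IR}=1/\gamma$, which serves as a sanity check.
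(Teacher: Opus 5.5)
Your proposal is correct and is essentially the paper's proof. Both write $w(t,k)=S_t\cdot(\beta_t/N)\cdot\Pprob(\text{infectious at infection age }k)$ and sum the last factor to $\mu^{\text{IR}}$. The differences are minor: you read the age-independent per-pair rate $\beta_t/N$ off the mass-action term rather than deriving it from a contact model (Lemma~2), and you replace the sum-swapping of Lemma~1 with the pathwise count $\sum_{k}\mathbf{1}\{W^{\text{EI}}<k\le W^{\text{EI}}+W^{\text{IR}}\}=W^{\text{IR}}$, which needs no independence of the delays and does not depend on the off-by-one convention (the paper's $W^{\text{EI}}\le k<W^{\text{EI}}+W^{\text{IR}}$ also counts $W^{\text{IR}}$ days).

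Your ``main obstacle'' is not really one for this proposition: the paper simply posits individual-level delays with laws $\pi^{\text{EI}},\pi^{\text{IR}}$, and checking that $\sum_k x_{t-k}w(t,k)$ reproduces $\mathbb{E}[x_t\given x_{<t}]$ is not needed for $R_t^I=R_t^M$ (the paper asserts a version of it only in Proposition~2, and it is not exact in general, because the observed incidence history is informative about which past infections are still infectious).
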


Recall $R_t^I = \sum_{k> 0} w(t,k)$ depends on how the $k$-step transmission rate $w(t,k)$ is defined.
We use the mean interpretation from \eqref{eq:fraser_alt}, as opposed to a data-driven notion that drops the expectation.
Otherwise, the $R_t$ equivalence would only hold in expectation.

\begin{proof}

Expanding on the definition of $w(t,k)$,
\begin{align}
w(t,k) &= \mathbb{E}\left[\text{\# secondary transmissions at $t$}\mid \text{primary infection at $t-k$}\right]\\
    &=\mathbb{E}\left[\frac{1}{E_{t-k}^*}\sum_{i=1}^{E_{t-k}^*} \text{\# transmissions at $t$ of $i$'th infection at $t-k$}\right] \\
    &= \mathbb{E}\left[\frac{1}{E_{t-k}^*}\sum_{i=1}^{E_{t-k}^*} \sum_{j=1}^{S_{t}}\mathbf{1}\{\text{infection $i$ transmits to susceptible } j \text{ at } t\}\right]\\
    &= S_t \cdot \mathbb{P}(\text{$i$'th infection at $t-k$ transmits to $j$'th susceptible at $t$})
\end{align}

Three conditions must be met for transmission to occur.
First, the $i$'th infection at $t-k$  must still be infectious by $t$.
Since delay distributions are assumed stationary, this is equivalent to an infection being infectious $k$ days later.
Secondly, they must come into contact with the $j$'th susceptible.
Finally, secondary transmission must actually occur, as not all susceptible-infectious contacts result in infection.
Formalizing this and summing over all lags $k$  yields
\begin{align}
    R_t^I &= S_t \sum_{k> 0} {\mathbb{P}\!\left(\text{infectious at } t \mid \text{new infection at } t-k\right)} \\
    &\hspace{70pt}\cdot {\mathbb{P}(i \text{ contacts } j \text{ at } t)} \cdot {\mathbb{P}
    (\text{transmission} \mid S\text{-}I \text{ contact})}\\
     &= S_t \cdot {\mathbb{P}(i \text{ contacts } j \text{ at } t)} \cdot {\mathbb{P}
    (\text{transmission} \mid S\text{-}I \text{ contact})}\\
    &\hspace{70pt}\cdot\sum_{k> 0} {\mathbb{P}\!\left(\text{infectious $k$ days after infection}\right)}.
\end{align}

Next, we leverage the following two lemmas, proven below.
\begin{lemma}\label{lem:infectious_kernel}
For any delay distributions $\pi^\text{EI}$ and $\pi^\text{IR}$,
    $$\sum_{k> 0} {\mathbb{P}\!\left(\text{infectious $k$ days after infection}\right)} = \mu^\text{IR}.$$
\end{lemma}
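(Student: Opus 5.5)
The plan is to write the probability of being infectious $k$ days after infection in terms of the two random waiting times, and then to apply the tail-sum formula for expectations. Let $W^\text{EI}\sim\pi^\text{EI}$ and $W^\text{IR}\sim\pi^\text{IR}$ be the latent and infectious periods of a single infection. They are independent, since each transition in \eqref{eq:seir-model} is governed by its own stationary delay kernel. An individual infected at time $0$ enters $I$ at time $W^\text{EI}$ and leaves $I$ at time $W^\text{EI}+W^\text{IR}$. Fix the boundary convention to match the discrete model: someone entering $I$ on day $s$ counts toward $I_t$ for $t=s+1,\dots,s+W^\text{IR}$, which is what the bookkeeping $I_{t+1}=I_t+I_t^*-R_t^*$ gives. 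Then the event of being infectious $k$ days after infection is
\begin{equation}
A_k=\{W^\text{EI} < k \le W^\text{EI}+W^\text{IR}\}.
\end{equation}

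Next, exchange sum and expectation; Tonelli applies because all terms are nonnegative. This gives
\begin{equation}
\sum_{k>0}\mathbb{P}(A_k)=\mathbb{E}\Bigl[\sum_{k>0}\mathbf{1}\{W^\text{EI}<k\le W^\text{EI}+W^\text{IR}\}\Bigr].
\end{equation}
For fixed values of the waiting times, the inner sum counts the integers $k$ in the window $(W^\text{EI},\,W^\text{EI}+W^\text{IR}]$. Because $W^\text{EI}\ge 0$, this window lies entirely in $k>0$, so the count is exactly $W^\text{IR}$. Taking expectations yields $\mathbb{E}[W^\text{IR}]=\mu^\text{IR}$, and this holds for any $\pi^\text{EI}$ and $\pi^\text{IR}$. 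Equivalently, one can condition on $W^\text{EI}=s$ and use the tail-sum identity $\sum_{m\ge 1}\mathbb{P}(W^\text{IR}\ge m)=\mathbb{E}[W^\text{IR}]$.

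The main obstacle is the off-by-one convention, not the computation. I need to check that the timing of infectiousness in \eqref{eq:seir-model} makes the window contain exactly $W^\text{IR}$ lattice points. If the convention were $W^\text{EI}\le k< W^\text{EI}+W^\text{IR}$, the count would still be $W^\text{IR}$, but the point $k=0$ could then fall inside the window when $W^\text{EI}=0$. The support conditions $s<t$ in $\pi^\text{EI}$ and $\pi^\text{IR}$ rule this out, since they force $W^\text{EI},W^\text{IR}\ge 1$. I would note this explicitly.

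For a continuous-time reading, the same argument goes through with the sum replaced by an integral over $k$. The window then has length $W^\text{IR}$, so the result is again $\mu^\text{IR}$, and no boundary issue arises.
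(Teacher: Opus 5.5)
Your proof is correct and is essentially the paper's argument: the paper conditions on $W^\text{EI}=j$, swaps the double sum over $(k,j)$, and reindexes to the tail sum $\sum_{a\ge 0}\Pprob(W^\text{IR}>a)=\mu^\text{IR}$, which is your Tonelli/lattice-point count written out term by term. The one difference is the boundary convention. The paper takes ``infectious at age $k$'' to be $\{W^\text{EI}\le k< W^\text{EI}+W^\text{IR}\}$, which allows transmission on the day of the E$\to$I transition, whereas your window is shifted by one day to match the $I_{t+1}=I_t+I_t^*-R_t^*$ bookkeeping. As you note, this does not change the lemma, though it would turn the $-\tfrac{1}{2}$ in the paper's mean generation time into $+\tfrac{1}{2}$.
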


\begin{lemma}\label{lem:homo_to_beta}
    Assuming homogeneous mixing,
    $${\mathbb{P}(i \text{ contacts } j \text{ at } t)} \cdot {\mathbb{P}
    (\text{transmission} \mid S\text{-}I \text{ contact})} = \beta_t / N.$$
\end{lemma}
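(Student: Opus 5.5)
The cleanest route is not to model contacts from scratch, but to identify the pairwise transmission probability by matching expected incidence against the susceptible-depletion equation \eqref{eq:seir-s}. Write
$$q_t \coloneqq \mathbb{P}(i \text{ contacts } j \text{ at } t) \cdot \mathbb{P}(\text{transmission} \mid S\text{-}I \text{ contact})$$
for the probability that a given infectious individual $i$ transmits to a given susceptible $j$ during timestep $t$. Under homogeneous mixing every pair of individuals is equally likely to meet. Biological delays are stationary and carried entirely by $\pi^\text{EI}$ and $\pi^\text{IR}$. Hence $q_t$ does not depend on the identities of $i$ and $j$, nor on how long $i$ has been infectious; it is a single number per timestep.

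Next, compute $\mathbb{E}[E_t^* \mid \text{state at } t]$ in two ways. First, by linearity of expectation over all infectious--susceptible pairs, the expected number of new exposures is the sum of pairwise transmission probabilities:
$$\sum_{i=1}^{I_t}\sum_{j=1}^{S_t} q_t = q_t\, S_t I_t .$$
Here I will note that a susceptible infected by two sources is counted twice. This is the usual first-order (small per-step probability) approximation implicit in the Poisson model, consistent with the paper's appeal to Le Cam. Second, the model \eqref{eq:seir-s} gives $\mathbb{E}[E_t^*\mid\cdot] = \beta_t S_t I_t/N$. Equating the two expressions, and assuming $S_t I_t>0$ (otherwise there is nothing to prove, since $R_t^I$ carries the factor $S_t$), gives $q_t = \beta_t/N$.

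As a sanity check, and to connect with the verbal definition of $\beta$ in \cref{sec:compartmental-bg}, I will also give the direct contact argument. Write $\beta_t = c_t p$, where $c_t$ is the number of contacts per unit time of an infectious individual and $p$ is the per-contact transmission probability. Homogeneous mixing means each of $i$'s $c_t$ contacts lands on a uniformly chosen member of the population, so the expected number of contacts with the specific individual $j$ is $c_t/N$. To first order this is $\mathbb{P}(i \text{ contacts } j \text{ at } t)$. Multiplying by $p$ recovers $\beta_t/N$.

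The main obstacle is this identification of a probability with an expected count: an individual may contact $j$ more than once in a day, and $j$ may be infected by more than one source. I will handle it by stating explicitly that, in the Poisson/Euler-discretized model, these quantities are per-timestep rates, and that the relevant per-step probabilities are small. Under that reading the identity holds exactly at the level of the mean process \eqref{eq:fraser_alt}, which is the level at which $w(t,k)$ and hence $R_t^I$ are defined.
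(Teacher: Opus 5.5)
Your argument is correct at the paper's level of rigor, but your main route differs from the paper's, which is bottom-up from contacts. The paper lets $C_t$ be the random number of contacts at $t$, with mean $\mu_t$. It uses homogeneous mixing to set $\Pprob(i \text{ contacts } j \given C_t=n)=n/N$, applies the law of total probability to get $\mu_t/N$, and then invokes the definition $\beta_t=\mu_t\cdot\Pprob(\text{transmission}\given S\text{-}I\text{ contact})$. It never uses the incidence law $E_t^*\sim\text{Pois}(\beta_tS_tI_t/N)$. That is essentially your ``sanity check.'' The one difference is that treating the $n$ contacts as distinct makes the contact probability exactly $n/N$, so the paper needs no expected-count-equals-probability step there.

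Your primary route instead back-solves the pairwise probability from the incidence law. This gains something the paper leaves implicit: it shows that the $\beta_t$ in the lemma is the same parameter that drives the model, and hence $R_t^M$. It does not rely on the verbal reading of $\beta_t$ as contacts times per-contact probability; that link is itself only first-order, for exactly the double-counting reason you flag. The cost is that mass action is assumed rather than derived. Homogeneous mixing then no longer explains the $1/N$; it only serves to make $q_t$ uniform.

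That uniformity does real work, because matching means identifies only the average pairwise probability over the $S_tI_t$ pairs. To get the per-pair claim, justify it by homogeneous mixing together with the fact that the SEIR force of infection weights every member of $I_t$ equally. Stationarity of the delay distributions alone says nothing about infectiousness varying with time since onset.

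Also, at a realized state with $I_t=0$ the matching is vacuous; your appeal to the factor $S_t$ only disposes of $S_t=0$. Treat $\beta_tS_tI_t/N$ as the structural mean for every hypothetical state with $I_t>0$, not just the realized one.
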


Applying these lemmas, $R_t^I = S_t \cdot  \mu^\text{IR} \cdot \beta_t/N$.
By definition, this is $R_t^M$  \eqref{eq:mech-rt-cp}, completing the proof.
Again, $\mu^\text{IR} = \gamma^{-1}$ in the special case that $W^\text{IR} \sim \mathrm{Exp}(\gamma)$ or $\mathrm{Geom}(\gamma)$.
\cref{sec:compartmental-generation} provides further connections between instantaneous and mechanistic $R_t$.

\end{proof}

\subsection{Proof of \texorpdfstring{\cref{lem:infectious_kernel}}{Lemma \ref*{lem:infectious_kernel}}}

We want to show that the probability of being infectious $k$ days after infection sums to $\mu^\text{IR}$, the mean duration of infectiousness.
We define this probability as
\begin{equation}\label{eq:infectious-kernel}
\zeta^\text{EI}_k := \Pprob(\text{infectious $k$ days after infection}).
\end{equation}
In an SIR model, this reduces to the survival function of $W^\text{IR}$.
For SEIR models, a person is infectious $k$ days after infection if their latent period $W^\text{EI}$ was some $j\in\{1,\dots,k\}$ and their infectious duration $W^\text{IR}$ has not yet elapsed at age $k$.
By the law of total probability and conditional independence of $W^\text{EI}$ and $W^\text{IR}$,
    \begin{align}
    \zeta^\text{EI}_k
    &= \sum_{j=1}^{k} \Pprob(W^\text{EI}=j) \cdot \Pprob(W^\text{IR} > k-j)\\
    &= \sum_{j=1}^{k} \pi_j^\text{EI} \cdot \sum_{\ell=k-j+1} \pi_\ell^\text{IR}
\end{align}

To show that $\sum_k \zeta^\text{EI}_k=\mu^\text{IR}$, swap the order of summation:

\begin{align*}
    \sum_{k \geq 1} \zeta^\text{EI}_k &= \sum_{k \geq 1} \sum_{j=1}^{k} \Pprob(W^\text{EI} = j) \cdot \Pprob(W^\text{IR} > k-j)\\
    &= \sum_{j \geq 1} \Pprob(W^\text{EI} = j) \sum_{k \geq j} \Pprob(W^\text{IR} > k-j).
\end{align*}
The inner sum runs over $k = j, j+1, \dots$; substitute $a = k - j $ so that $a = 0,1, 2, \dots$:
\[
\sum_{k \geq j} \Pprob(W^\text{IR} > k-j) = \sum_{a \geq 0} \Pprob(W^\text{IR} > a).
\]
This is independent of $j$, so it factors out of the outer sum:
\[
\sum_{k \geq 1} \zeta^\text{EI}_k = \sum_{j \geq 1} \Pprob(W^\text{EI} = j) \cdot \sum_{a \geq 0} \Pprob(W^\text{IR} > a) = 1\cdot \mu^\text{IR}.
\]
The last line invokes the total mass of a distribution and the fact that the survival function of a non-negative random variable sums to its mean.

\subsection{Proof of \texorpdfstring{\cref{lem:homo_to_beta}}{Lemma \ref*{lem:homo_to_beta}}}

Define $C_t$ as a random variable for the average number of contacts at time $t$.
Define its mean:
\[
\mu_t = \mathbb{E}[C_t] = \sum_{n=0}^N n \cdot P(C_t = n).
\]
Under the assumption of homogeneous mixing, all contacts are equally likely, and the fact that $i$ is infectious and $j$ is irrelevant.
We decompose the contact probability using the law of total probability:
\begin{align}
     \mathbb{P}(\text{Infectious $i$ contacts Susceptible $j$ at $t$}) &=  \mathbb{P}(\text{$i$ contacts $j$ at $t$})\\
     &= \sum_{n=0}^N \mathbb{P}(\text{$i$ contacts $j$ at $t$}\given C_t=n)\Pprob(C_t=n)\\
     &= \sum_{n=0}^N \frac{n}{N}\, P(C_t = n) = \frac{\mu_t}{N}.
\end{align}

Finally, recall the effective contact rate is defined as  \mbox{$\beta_t = \mu_t \cdot \Pprob(\text{transmission} \mid S\text{-}I \text{ contact at } t)$}. Hence,
\begin{align}
     &\mathbb{P}(\text{Infectious $i$ contacts Susceptible $j$ at $t$})\cdot
\mathbb{P}(\text{transmission} \mid S\text{-}I \text{ contact}) \\&\hspace{50pt}= \mu_t/N \cdot \mathbb{P}(\text{transmission} \mid S\text{-}I \text{ contact at }t) \\
&\hspace{50pt}= \frac{\beta_t}{N}.
\end{align}

\section{The SEIR Generation Interval}\label{sec:compartmental-generation}

The equivalence result established that instantaneous and mechanistic $R_t$ agree under homogeneous mixing.
We now ask a complementary question: what generation interval distribution does the SEIR model imply?
Deriving this closes the loop between the two frameworks, expressing the renewal equation entirely in compartmental terms.
\begin{proposition}\label{prop:compartmental-generation}
    Recall the infectious kernel $\zeta^\text{EI}_k = \Pprob(\text{infectious $k$ days after exposure})$ \eqref{eq:infectious-kernel} for a discrete-time compartmental model. The generation interval distribution, which satisfies the renewal equation, has weights
    $$g_k := \Pprob(\text{transmission $k$ days after exposure $\vert$ transmission occurs}) = \zeta^\text{EI}_k/{\mu^{IR}}.$$
    Moreover, $\mu^\text{IR}$ is the total mass of the kernel $\zeta^\text{EI}$, so $g_k = \frac{\zeta_k^\text{EI}}{\sum_j \zeta_j^\text{EI}}.$
    Lastly, the mean generation time (proved in \cref{apx:mean-generation}) is
    \[
        \mathbb{E}[G] = \mu^\text{EI} + \frac{\mu^\text{IR}}{2} - \frac{1}{2} + \frac{(\sigma^\text{IR})^2}{2\, \mu^\text{IR}}.
    \]
\end{proposition}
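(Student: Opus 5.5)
The plan has three parts: read the generation interval off the decomposition of $w(t,k)$ from the proof of \cref{prop:equivalence}, obtain the normalization from \cref{lem:infectious_kernel}, and compute the mean through the survival-sum identity.

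\textbf{Generation interval weights.} In the proof of \cref{prop:equivalence}, each lag contributes
\[
w(t,k) = S_t \cdot \mathbb{P}(i \text{ contacts } j \text{ at } t)\cdot \mathbb{P}(\text{transmission}\mid S\text{-}I\text{ contact}) \cdot \zeta^\text{EI}_k .
\]
By \cref{lem:homo_to_beta}, the middle two factors equal $\beta_t/N$, so $w(t,k) = (\beta_t S_t/N)\,\zeta^\text{EI}_k$. The prefactor $\beta_t S_t/N$ does not depend on $k$. Substituting into the definition \eqref{eq:gi_basic}, $g_k = w(t,k)/\sum_j w(t,j)$, the prefactor cancels and gives $g_k = \zeta^\text{EI}_k/\sum_j \zeta^\text{EI}_j$. \cref{lem:infectious_kernel} says the denominator equals $\mu^\text{IR}$, so both displayed forms follow. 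As a consistency check, $g_k = w(t,k)/R_t^I$ and $R_t^I = R_t^M = \beta_t S_t\mu^\text{IR}/N$ give the same expression. This also shows that $g$ does not depend on $t$, which the renewal equation \eqref{eq:renewal} requires.

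\textbf{Mean generation time.} Write
\[
\mathbb{E}[G] = \frac{1}{\mu^\text{IR}}\sum_{k\ge1} k\,\zeta^\text{EI}_k = \frac{1}{\mu^\text{IR}}\sum_{j\ge1}\pi^\text{EI}_j\sum_{a\ge0}(j+a)\,\mathbb{P}(W^\text{IR}>a),
\]
using the same swap of summation and substitution $a=k-j$ as in the proof of \cref{lem:infectious_kernel}. This splits into two sums.

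The first is $\sum_j j\,\pi^\text{EI}_j \cdot \mu^\text{IR}/\mu^\text{IR} = \mu^\text{EI}$.

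The second is $\frac{1}{\mu^\text{IR}}\sum_{a\ge0} a\,\mathbb{P}(W^\text{IR}>a)$. For a nonnegative integer variable $W$,
\[
\sum_{a\ge0} a\,\mathbb{P}(W>a) = \sum_{w}\mathbb{P}(W=w)\sum_{a=0}^{w-1} a = \mathbb{E}\!\left[\tfrac{W(W-1)}{2}\right] = \tfrac12\bigl(\mathbb{E}[W^2]-\mathbb{E}[W]\bigr).
\]
With $\mathbb{E}[W^2] = (\sigma^\text{IR})^2 + (\mu^\text{IR})^2$, dividing by $\mu^\text{IR}$ gives $\mu^\text{IR}/2 + (\sigma^\text{IR})^2/(2\mu^\text{IR}) - 1/2$. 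Adding the two pieces yields the claimed formula.

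\textbf{Main obstacle.} The delicate step is the discrete bookkeeping, which produces the $-\tfrac12$ term. It must stay consistent with the kernel convention $\mathbb{P}(W^\text{IR}>k-j)$, under which a person is infectious on the day of becoming infectious but not on the recovery day. I would verify this convention on a degenerate case, e.g.\ $W^\text{IR}\equiv 1$, where the formula should give $\mathbb{E}[G]=\mu^\text{EI}$. I also need absolute convergence to justify the interchanges, which follows from assuming $W^\text{IR}$ has a finite second moment.
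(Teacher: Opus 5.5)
Your proposal is correct. The mean-generation-time part follows the paper's argument in \cref{apx:mean-generation} essentially step for step: swap the sums, set $a=k-j$, split off $j\,\mu^\text{IR}$, and evaluate $\sum_{a\ge 0} a\,\Pprob(W^\text{IR}>a)=\tfrac12(\mathbb{E}[(W^\text{IR})^2]-\mu^\text{IR})$. You obtain the weights $g_k$, however, by a different route.

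The paper never returns to $w(t,k)$. It starts from the aggregate SEIR incidence $E_t^*=\beta_t S_t I_t/N$ and rewrites $\beta_t S_t/N$ as $R_t^I/\mu^\text{IR}$ using \cref{prop:equivalence}. It then expands prevalence as a convolution of past incidence, $I_t=\sum_{k\ge 1}E^*_{t-k}\zeta^\text{EI}_k$, recognizes the renewal equation \eqref{eq:renewal}, and reads off $g_k=\zeta^\text{EI}_k/\mu^\text{IR}$. The stochastic model is treated separately via the tower rule, conditioning on $I_t$.

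You instead work at the level of a single primary infection. The lag-$k$ term from the proof of \cref{prop:equivalence}, together with \cref{lem:homo_to_beta}, gives $w(t,k)=(\beta_t S_t/N)\,\zeta^\text{EI}_k$. The normalization \eqref{eq:gi_basic} then cancels the $k$-free prefactor.

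Your version is shorter. It needs no deterministic/stochastic split, since $w$ is already an expectation, and it does not use the prevalence decomposition. It also makes explicit that $g$ does not depend on $t$, which \eqref{eq:gi_basic} tacitly presumes.

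The paper's version buys something different: it shows directly that the SEIR incidence equation itself is a renewal equation once $I_t$ is written in terms of past exposures. That is the ``closing the loop'' this section is after. Your argument identifies $g$ through the definitional normalization of $w$, and reaches the renewal equation only through the generic background derivation from \eqref{eq:fraser_alt}. Both routes ultimately rest on \cref{lem:homo_to_beta}.

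Two minor remarks. First, the cancellation needs $\beta_t S_t>0$; otherwise \eqref{eq:gi_basic} is $0/0$ at that $t$. Second, the summation interchanges need no moment assumption, because all terms are nonnegative. A finite second moment of $W^\text{IR}$ is needed only for $\mathbb{E}[G]$ to be finite.
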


\begin{proof}
    For now, suppose transmissions are deterministic.
In an SEIR model, $E_t^* = \beta_t \cdot \frac{S_t}{N} \cdot I_t$.
Instantaneous $R_t$ is equivalent to mechanistic $R_t$, defined as $R_t^M = \beta_t \cdot \frac{S_t}{N} \cdot \mu^{IR}$;
plugging this in yields $E_t^* = R_t^I \cdot I_t / \mu^\text{IR}$.
Infectious prevalence $I_t$ can be defined using the kernel $\zeta^\text{EI}$:
\begin{equation}
I_t = \sum_{k \geq 1} E_{t-k}^* \cdot \zeta^\text{EI}_k.
\label{eq:prev-decomp}
\end{equation}
Substituting this expression in for $I_t$ yields the renewal equation \eqref{eq:renewal}, expressed in compartmental terms:
\begin{equation}\label{eq:mech-renewal-halfway}
    E_t^* = R_t^I/ \mu^\text{IR} \cdot \sum_{k \geq 1} E_{t-k}^* \cdot \zeta^\text{EI}_k =  R_t^I \cdot \sum_{k \geq 1} E_{t-k}^* \cdot (\zeta^\text{EI}_k / \mu^\text{IR}).
\end{equation}
Comparing with \eqref{eq:renewal}, the generation interval weights are $g_k=\zeta^\text{EI}_k / \mu^\text{IR}$:
the infectious kernel $\zeta^\text{EI}$, rescaled by the mean infectious duration $\mu^\text{IR}$.

To handle the stochastic case,
recall that instantaneous $R_t$ and the generation interval are typically defined taking the renewal equation in expectation: \mbox{$\mathbb{E}[E_t^*\ \vert\ E_{s<t}^*] = \sum_{k\geq 1} E_{t-k}^* g_k$}
To derive $\mathbb{E}[E_t^*\ \vert\ E_{s<t}^*]$ in the compartmental perspective, we use the tower rule to impute $I_t$:
\begin{equation}
    \mathbb{E}[E_t^*\ \vert\ E_{s<t}^*] = \mathbb{E}[\mathbb{E}[E_t^*\ \vert\ E_{s<t}^*, I_t]]  =
\frac{R_t}{\mu^\text{IR}}
\mathbb{E}[ I_t\ \vert\ E_{s<t}^*]
= \frac{R_t}{\mu^\text{IR}}
\sum_{k \geq 1} E_{t-k}^* \cdot \zeta^\text{EI}_k.
\end{equation}
Thus, we return to the renewal equation, with $g_k = \zeta_k^\text{EI}/\mu^\text{IR}$ defined probabilistically.
\end{proof}

\subsection{Discussion}

Renewal equation estimators are often presented as relatively assumption-light: given a generation interval $g$ and observed infections, one can estimate $R_t$ without specifying a full mechanistic model.
However, the results above reveal that the generation interval itself encodes compartmental structure.
The weights $g_k = \zeta_k^\text{EI}/\mu^\text{IR}$ depend on the latent and infectious period distributions $\pi^\text{EI}$ and $\pi^\text{IR}$, and the mean generation time depends on their first two moments.
In practice, choosing a generation interval for use in methods like EpiEstim implicitly assumes a model of disease progression, even if that model is never written down.
Making this dependence explicit, as we have done here, clarifies what assumptions are baked into standard $R_t$ estimates.

Our equivalence result relies on homogeneous mixing: all individuals are equally likely to contact one another.
Real populations exhibit substantial heterogeneity in contact patterns across age groups, geographic regions, and behavioral strata.
Extending the renewal--compartmental correspondence to structured populations is a natural direction, though in practice most $R_t$ estimation pipelines assume homogeneous mixing as a simplifying approximation \citep{gostic2020practical}.

\bibliographystyle{plainnat}
\bibliography{refs}

\appendix

\section{Proof of mean generation time for SEIR}\label{apx:mean-generation}

We prove the mean generation time stated in \cref{prop:compartmental-generation}.
\begin{proposition}[Restated]
The mean generation time for an SEIR model is
    \[
        \mathbb{E}[G] = \mu^\text{EI} + \frac{\mu^\text{IR}}{2} - \frac{1}{2} + \frac{(\sigma^\text{IR})^2}{2\, \mu^\text{IR}}.
    \]
\end{proposition}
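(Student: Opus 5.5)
We compute $\mathbb{E}[G]=\sum_{k\ge1} k\, g_k$ using the representation $g_k=\zeta^\text{EI}_k/\mu^\text{IR}$ from \cref{prop:compartmental-generation}, together with the decomposition of the infectious kernel from the proof of \cref{lem:infectious_kernel}:
\[
\zeta^\text{EI}_k=\sum_{j=1}^{k}\Pprob(W^\text{EI}=j)\,\Pprob(W^\text{IR}>k-j).
\]
This gives
\[
\mathbb{E}[G]=\frac{1}{\mu^\text{IR}}\sum_{k\ge1}k\sum_{j=1}^{k}\Pprob(W^\text{EI}=j)\,\Pprob(W^\text{IR}>k-j).
\]
As in that proof, we swap the order of summation and substitute $a=k-j\ge0$, writing $k=j+a$. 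The double sum then splits into two pieces:
\[
\mathbb{E}[G]=\frac{1}{\mu^\text{IR}}\Bigl[\sum_{j\ge1}j\,\Pprob(W^\text{EI}=j)\sum_{a\ge0}\Pprob(W^\text{IR}>a)+\sum_{j\ge1}\Pprob(W^\text{EI}=j)\sum_{a\ge0}a\,\Pprob(W^\text{IR}>a)\Bigr].
\]
The first piece is $\mu^\text{EI}\cdot\mu^\text{IR}$, because the survival function of $W^\text{IR}$ sums to its mean (the same fact used for \cref{lem:infectious_kernel}). The second piece is $1\cdot\sum_{a\ge0}a\,\Pprob(W^\text{IR}>a)$. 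After dividing by $\mu^\text{IR}$, the latent period contributes exactly $\mu^\text{EI}$, independently of its shape. This mirrors the probabilistic picture: the generation time is the latent period plus an independent time drawn from the equilibrium (length-biased) distribution of infectious age.

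The remaining task, and the main one, is the discrete identity for a nonnegative integer random variable $W$:
\[
\sum_{a\ge0}a\,\Pprob(W>a)=\frac{\mathbb{E}[W^2]-\mathbb{E}[W]}{2}.
\]
To prove it, write $\Pprob(W>a)=\sum_{\ell>a}\Pprob(W=\ell)$ and exchange the sums. For fixed $\ell$ the inner sum is $\sum_{a=0}^{\ell-1}a=\ell(\ell-1)/2$, so the total is $\mathbb{E}[W(W-1)]/2$.

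Setting $\mathbb{E}[(W^\text{IR})^2]=(\sigma^\text{IR})^2+(\mu^\text{IR})^2$, where $\sigma^\text{IR}$ is the standard deviation of $W^\text{IR}$, and dividing by $\mu^\text{IR}$ gives
\[
\frac{(\sigma^\text{IR})^2+(\mu^\text{IR})^2-\mu^\text{IR}}{2\mu^\text{IR}}=\frac{\mu^\text{IR}}{2}-\frac12+\frac{(\sigma^\text{IR})^2}{2\mu^\text{IR}}.
\]
Adding $\mu^\text{EI}$ yields the stated formula. The $-\tfrac12$ term is the discrete-time correction; it would be absent in continuous time.

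The step most likely to cause trouble is getting the discrete indexing conventions right, since these conventions are exactly what produce the $-\tfrac12$. The points to check are that $j$ starts at $1$, that $a$ starts at $0$, and that the kernel uses the strict inequality $\Pprob(W^\text{IR}>k-j)$. I would verify these by checking the geometric case against a direct computation. For justification of the interchange of summations, all terms are nonnegative, so Tonelli's theorem applies, provided $W^\text{IR}$ has a finite second moment.
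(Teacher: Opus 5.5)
Your proposal is correct and follows essentially the same route as the paper's proof. It substitutes $g_k=\zeta^\text{EI}_k/\mu^\text{IR}$, then swaps the sums and reindexes with $a=k-j$ to get $\mu^\text{EI}\mu^\text{IR}+\sum_{a\ge0}a\,\Pprob(W^\text{IR}>a)$, and evaluates that last sum as $\tfrac12(\mathbb{E}[(W^\text{IR})^2]-\mu^\text{IR})$ via a second interchange and the arithmetic series. Your note that Tonelli justifies the interchanges (given a finite second moment of $W^\text{IR}$) is a small point of rigor that the paper leaves implicit.
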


\begin{proof}

We now derive the mean generation time $G$. By definition,
\[
\mathbb{E}[G] = \sum_{k \geq 1} k\, g_k = \frac{1}{\mu^\text{IR}} \sum_{k \geq 1} k\, \zeta^\text{EI}_k.
\]
Substituting $\zeta^\text{EI}_k = \sum_{j=1}^{k} \pi^\text{EI}_j \cdot \Pprob(W^\text{IR} > k-j)$ and swapping the order of summation,
\[
\sum_{k \geq 1} k\, \zeta^\text{EI}_k
= \sum_{k \geq 1} \sum_{j=1}^{k} k\, \pi^\text{EI}_j\, \Pprob(W^\text{IR} > k-j)
= \sum_{j \geq 1} \pi^\text{EI}_j \sum_{k \geq j} k\, \Pprob(W^\text{IR} > k-j).
\]
Reindex the inner sum with $a = k - j$, so that $a = 0, 1, 2, \dots$ and $k = j + a$:
\[
\sum_{j \geq 1} \pi^\text{EI}_j \sum_{a \geq 0} (j + a)\, \Pprob(W^\text{IR} > a)
= \sum_{j \geq 1} \pi^\text{EI}_j \left[ j \sum_{a \geq 0} \Pprob(W^\text{IR} > a) + \sum_{a \geq 0} a\, \Pprob(W^\text{IR} > a) \right].
\]
The first inner sum is $\mu^\text{IR}$, since the survival function of a non-negative integer random variable sums to its mean. The bracketed expression is therefore $j\, \mu^\text{IR} + S$, where $S := \sum_{a \geq 0} a\, \Pprob(W^\text{IR} > a)$ does not depend on $j$. Pulling $S$ out of the outer sum and using $\sum_j j\, \pi^\text{EI}_j = \mu^\text{EI}$,
\[
\sum_{k \geq 1} k\, \zeta^\text{EI}_k = \mu^\text{EI}\, \mu^\text{IR} + S.
\]

It remains to evaluate $S$. Writing the survival function as a sum over $\pi^\text{IR}$ and again swapping the order of summation,
\[
S = \sum_{a \geq 0} a \sum_{r > a} \pi^\text{IR}_r
= \sum_{r \geq 1} \pi^\text{IR}_r \sum_{a = 0}^{r-1} a.
\]
The inner sum is the standard arithmetic series $\sum_{a=0}^{r-1} a = \tfrac{r(r-1)}{2} = \tfrac{r^2 - r}{2}$. Therefore
\begin{align*}
    S &= \sum_{r \geq 1} \pi^\text{IR}_r \cdot \frac{r^2 - r}{2}
= \frac{1}{2}\left( \sum_r r^2\, \pi^\text{IR}_r - \sum_r r\, \pi^\text{IR}_r \right)
= \frac{1}{2}\left( \mathbb{E}[(W^\text{IR})^2] - \mu^\text{IR} \right)\\
&=\frac{1}{2}\left((\mu^\text{IR})^2 + (\sigma^\text{IR})^2 - \mu^\text{IR}\right).
\end{align*}
Combining the two results and dividing by $\mu^\text{IR}$,
\[
\mathbb{E}[G] = \frac{\mu^\text{EI}\, \mu^\text{IR} + S}{\mu^\text{IR}} = \mu^\text{EI} + \frac{\mu^\text{IR}}{2}  + \frac{(\sigma^\text{IR})^2}{2\, \mu^\text{IR}} - \frac{1}{2}.
\]
\end{proof}

Notably, the latent-period SD $\sigma^\text{EI}$ does not enter the expression for $\mathbb{E}[G]$. The $-\tfrac{1}{2}$ reflects that $\zeta^\text{EI}$ assigns mass at the day of the E$\to$I transition itself ($\Pprob(W^\text{IR} > 0) = 1$), so an individual is permitted to transmit on the very day they enter the infectious compartment. It vanishes in the continuous-time analogue.

\section{Case reproduction numbers}\label{apx:case_rt}

Similar arguments deepen our understanding of how case $R_t$ relates to the other definitions.
\citet{fraser2007} relates $R_t^I$ and $R_t^c$ through the generation interval distribution by substituting $w(t,k)=R_t^Ig_k$ into the definition of case $R_t$:
\begin{equation}\label{eq:case-inst-fraser}
    R_t^C = \sum_{k> 0} R_{t+k}^I g_k.
\end{equation}
This further reinforces the idea that instantaneous $R_t$ is the average number of secondary transmissions from a primary infection at $t$ if conditions remain unchanged.
We can derive this same expression under compartmental assumptions, using the tools introduced in \cref{prop:compartmental-generation}.

By definition, $R_t^C = \sum_{k > 0} w(t+k,k)$. Expanding $w(t+k,k)$ as in the proof of Proposition~\ref{prop:equivalence} and recalling the definition of $\zeta_k^\text{EI}$,
\begin{align*}
R_t^C
  &=  \sum_{k> 0} S_{t+k}\cdot {\mathbb{P}\!\left(\text{infectious at } t+k \mid \text{new infection at } t\right)} \\
    &\hspace{70pt}\cdot {\mathbb{P}(i \text{ contacts } j \text{ at } t+k)}\cdot\mathbb{P}
    (\text{transmission} \mid S\text{-}I \text{ contact})\\
&= \sum_{k > 0} S_{t+k} \cdot \frac{\beta_{t+k}}{N} \cdot \zeta_k^\text{EI} ,
\end{align*}
\cref{lem:homo_to_beta} established that instantaneous $R_t$ and mechanistic $R_t$ both equal $\beta_t S_t \mu^\text{IR}/N$.
Plugging that in,
$$R_t^C = \sum_{k > 0} R_{t+k}^I \cdot \frac{\zeta_k^\text{EI}}{\mu^{\mathrm{IR}}}.$$
\cref{prop:compartmental-generation} established that in compartmental equations, \mbox{$g_k = \frac{\zeta_k^\text{EI}}{\mu^{\mathrm{IR}}}$}.
This completes the proof of \eqref{eq:case-inst-fraser}.

The method by \citet{wallinga_teunis} is the standard approach to estimate case $R_t$ from aggregate data.
Given individual-level data, one could compute $R_t^C$ by simply identifying the cohort of people infected at $t$ and tracking their secondary transmissions.

\end{document}